\newtheorem{theorem}{Theorem}[section]
\newtheorem{lemma}[theorem]{Lemma}
\newtheorem{corollary}[theorem]{Corollary}
\numberwithin{equation}{section}
\title{A short and elegant proof of a theorem of J.-E. Pin}
\author{Michiel de Bondt}
\begin{document}

\maketitle

\begin{abstract}
We shorten the proof of a theorem of J.-E. Pin (theorem \ref{corank3} below), 
which can be found in his thesis. The part of the proof which is my own (not Pin's) is a complete replacement of the same part in an earlier version of this paper.
\end{abstract}

\section{Introduction}

In the literature, there are several references to a theorem which appeared in the thesis of J.-E. Pin (theorem \ref{corank3} below). Pin promoted in 1981, but the theorem has not been published since then. Nevertheless, the parts which were relevant for the theorem were shared with the author by Pin. I checked the proof and found no real errors. In the proof which is presented in this paper, I follow the proof of Pin, but only up to a certain point. At that point, I take another road, which substantially shortens the proof.

In the previous version of this paper, I took another road at the same point as well. But the road in this paper is yet another road, because I completely replaced my part of the proof of Pin's theorem in this version of the paper. The proof has become even shorter in this paper, because there is no deep case selection on the newest road. The road in the previous version came down to an analysis of four deep cases. The four cases were solved without doing anything fancy, except maybe the last case, which was the hardest case.

The first step in the transition from my old road to this road was a merge of the first three cases. With that, the construction of the second case was adapted. The unifying argument of the three merged cases came down to an easy subcase of lemma \ref{pq}, namely the case $q = 3x$ and $p \ne 2x$ of it. The second step was making a lemma of the argument and include the case $q = 3x$ and $p = 2x$, because that provided the proof of the fourth case. So lemma \ref{pq} was born and all cases were merged to form lemma \ref{reach}. The last step was dropping the condition $q = 3x$ in lemma \ref{pq}: a generalization which does not serve the unified proof of the four cases.

In the previous version of this paper, I criticized Pin for having a layered proof on the road he took, as opposed to my road in the previous version. This is because both Pin's road and my old road involve argumentations for four deep cases, where Pin's argumentations seemed more complicated. But surprisingly, the road in this version does have a layering structure as well.

But before we can formulate Pin's theorem and its proof, we first need to make some definitions and fix some notations. A \emph{deterministic finite automaton} (DFA) is a tuple $(Q,\Sigma,q_0,F)$, where $Q$ is a finite set, each $s \in \Sigma$ is a function from $Q$ to itself, $q_0 \in Q$, and $F \subseteq Q$. The parameters $q_0$ and $F$ are irrelevant in this paper, so we will ignore them. Following a common practice, we use postfix notation for each $s \in \Sigma$, so we write $qs = q'$ instead of $s(q) = q'$. Furthermore, if $R \subseteq Q$, then we write $Rs$ for $\bigcup_{q \in R} qs$.

We denote by $\Sigma^{*}$ the set of all words over $\Sigma$. We follow a usual
formal definition of the set of words as an inductive type, which comes down
to the following. A word $w$ is either the empty word $\lambda$ or $sw'$, where
$s$ is the starting letter and $w'$ is the rest of $w$ as another word. 
Now $Rw$ and $|w|$ can be defined by induction on the structure of $w$ as follows.
\begin{align*}
R\lambda &= R & |\lambda| &= 0\\
R(sw') &= (Rs)w' & |sw'| &= 1 + |w'|
\end{align*}
We call $|w|$ the \emph{length of $w$}, and denote by $\Sigma^i$ the set of all words of length $i$ over $\Sigma$. We denote by $|R|$ the size of a subset $R \subseteq Q$. The following theorem can be found in the thesis of J.-E. Pin.

\begin{theorem}[J.-E. Pin] \label{corank3}
Let $c \le 3$. Suppose that $(Q,\Sigma,\ldots)$ is a DFA, and there exist 
$w \in \Sigma^{*}$ such that $|Qw| \le |Q| - c$. Then we can choose $w$ 
such that $|w| \le c^2$.
\end{theorem}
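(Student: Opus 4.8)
The plan is to prove, by induction, a statement about arbitrary subsets of $Q$ that contains the theorem as a special case. For $R \subseteq Q$ write $\delta(R) = |Q| - |R|$. I would show: \emph{if $\delta(R) \le c \le 3$ and some word $w$ satisfies $|Rw| \le |Q| - c$, then some such word has length $\le c^2 - \delta(R)^2$.} Theorem~\ref{corank3} is the case $R = Q$, $\delta(R) = 0$. The induction is on $c - \delta(R)$; the base case $c = \delta(R)$ is trivial since the empty word works. For the inductive step I would isolate a single-step claim: \emph{if $\delta(R) \le 2$, $\delta(R) < c \le 3$, and $R$ can be compressed to size $\le |Q| - c$, then there is a word $u$ with $|u| \le 2\,\delta(R) + 1$ such that $|Ru| < |R|$ and $Ru$ can still be compressed to size $\le |Q| - c$.} Granting this, one applies the claim to $R$ and then the induction hypothesis to $Ru$ (legitimate because $\delta(Ru) \ge \delta(R) + 1$, so $c - \delta(Ru) < c - \delta(R)$) and concatenates the two words; the lengths add correctly since $(2\,\delta(R) + 1) + \bigl(c^2 - (\delta(R)+1)^2\bigr) = c^2 - \delta(R)^2$, the successive per-step budgets being $1, 3, 5$.

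For $\delta(R) = 0$ the single-step claim is the standard $c = 1$ argument: taking a shortest word $s_1 \cdots s_k$ with $|Q s_1 \cdots s_k| < |Q|$, the prefix $s_1 \cdots s_{k-1}$ is injective, hence bijective, on $Q$, so $Q s_1 \cdots s_{k-1} = Q$ and the single letter $u = s_k$ already reduces the size; moreover $Qs_k \subseteq Q$ inherits from $Q$ the ability to be compressed to size $\le |Q| - c$. The substance lies in $\delta(R) \in \{1,2\}$, where $R$ is not closed under the letters. There I would argue first that, since $R$ can be compressed at all, adjoining the at most two elements of $Q \setminus R$ shows $Q$ itself is $1$-compressible, so the $c = 1$ case yields a pair $\{a,b\} \subseteq Q$ and a letter $t$ with $at = bt$. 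The goal is then to find a word of length $\le 2\,\delta(R) + 1$ that pushes $R$ to an image containing such a one-step-collapsible pair and then collapses it; the bound of $2\,\delta(R)$ on the length of this preliminary part is what uses the structure of $Q$.

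The main obstacle I anticipate is the last requirement of the single-step claim: ensuring that the pushed image $Ru$ can still be compressed all the way to size $\le |Q| - c$. This forces a case analysis on the shape of an optimal compression of $R$: after trimming, it collapses a single block of $c+1$ states, or a triple together with a disjoint pair, or up to three pairwise-disjoint pairs. In each case the pushing word must be chosen so as not to disturb the portion of the compression not yet used, and carrying this out within the length budget is exactly where the hypothesis $c \le 3$ enters, so that the number of active states one must track stays small. This is also where I would expect the method to fail for $c \ge 4$, when a five-element block or two disjoint triples can occur.
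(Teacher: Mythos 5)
Your inductive statement is false, and the \v{C}ern\'y automaton on four states already refutes it. Take $Q = \{1,2,3,4\}$ with $a$ the cycle $1\to 2\to 3\to 4\to 1$ and $b$ given by $1b = 2b = 2$, $3b = 3$, $4b = 4$; here $c = 3$. The set $R = \{2,4\}$ has $\delta(R) = 2$ and can be compressed to a singleton, but a breadth-first search shows the shortest word doing so is $abaaab$, of length $6$: the $a$-orbit of $\{2,4\}$ is $\bigl\{\{2,4\},\{1,3\}\bigr\}$ and $b$ fixes $\{2,4\}$, so one must pass through $\{2,3\}$, $\{3,4\}$, $\{1,4\}$, $\{1,2\}$ before collapsing. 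This exceeds your budget $c^2 - \delta(R)^2 = 5$, so both your strengthened theorem and your single-step claim for $\delta(R) = 2$ (which in this case just asserts a compressing word of length $\le 5$) fail. The failure cannot be routed around by a better choice earlier in the induction: the unique two-element set within distance $3$ of $Qb = \{2,3,4\}$ is precisely $\{2,4\}$, while the two-element sets that do admit a $4$-step compression, such as $\{2,3\}$, lie at distance $4$ from $\{2,3,4\}$. The correct split for this automaton is $1+4+4$, not $1+3+5$, so no fixed per-level budget of the form $c^2 - \delta^2$ can be right, even though the totals agree.

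This obstruction is exactly what the paper is organized around, and it splits the argument differently: either the first two size drops occur within $3$ steps, in which case theorem \ref{franklpin} (or theorem \ref{pin}) supplies the last drop with a budget of $6$, not $5$; or they require $4$ steps, in which case the strong structural constraints of lemma \ref{corank2} become available and lemma \ref{X} finishes in $5$ further steps by explicit case analysis. Theorem \ref{greedy} in the last section characterizes precisely the automata --- essentially the \v{C}ern\'y type above --- on which greedy-style per-stage budgets like yours must fail. The difficulty you anticipated (preserving compressibility of the image) is real but secondary; the fatal issue is the length bookkeeping itself.
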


If we omit the condition that $c \le 3$ in theorem \ref{corank3}, then
we obtain Pin's corank conjecture. Pin's corank conjecture does not hold for
$c = 4$, see \cite{K01} for $|Q| = 6$ and \cite{10.1007/978-3-319-40946-7_15}
for $|Q| \ge 6$.

\begin{corollary}[J.-E. Pin] \label{cerny4}
Suppose that $(Q,\Sigma,\ldots)$ is a DFA, and there exist $w \in \Sigma^{*}$
such that $|Qw| = 1$. If $n = |Q| \le 4$, then we can choose $w$ such that 
$|w| \le (n-1)^2$.
\end{corollary}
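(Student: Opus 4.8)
The plan is to obtain Corollary \ref{cerny4} as an immediate specialization of Theorem \ref{corank3}. Write $n = |Q|$; the case $n \le 1$ is trivial, since then $w = \lambda$ already gives $|Qw| = |Q| = n \le 1$ and $|\lambda| = 0 \le (n-1)^2$. So I would assume $2 \le n \le 4$ and set $c = n - 1$, so that $1 \le c \le 3$ and Theorem \ref{corank3} is applicable with this value of $c$.

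Next I would verify that the hypothesis of Theorem \ref{corank3} holds for this $c$. By assumption there is a word $w \in \Sigma^{*}$ with $|Qw| = 1$, and $|Q| - c = n - (n-1) = 1$, so $|Qw| \le |Q| - c$ holds, which is exactly the hypothesis required. Hence Theorem \ref{corank3} lets us replace $w$ by a word, again denoted $w$, with $|w| \le c^2 = (n-1)^2$ and $|Qw| \le |Q| - c = 1$.

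Finally I would note that $Qw$ is nonempty: $Q$ is nonempty and every letter, hence every word, acts on $Q$ as a total function, so $Qw \ne \emptyset$. Thus $|Qw| \ge 1$, and therefore $|Qw| = 1$; that is, this $w$ synchronizes the DFA and has length at most $(n-1)^2$, as desired.

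There is essentially no hard step here; the only things to watch are matching the direction of the inequality in the hypothesis of Theorem \ref{corank3} (rewriting $|Qw| = 1$ as $|Qw| \le |Q| - c$) and recording the trivial bound $|Qw| \ge 1$ to upgrade the conclusion to an equality. All the real content is carried by Theorem \ref{corank3}.
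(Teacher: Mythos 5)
Your proposal is correct and is essentially the paper's own proof, which simply takes $c = n-1$ in Theorem \ref{corank3}; your extra checks (the trivial case $n \le 1$, the direction of the inequality, and $|Qw| \ge 1$) are harmless elaborations of the same one-line argument.
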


\begin{proof}[Proof (J.-E. Pin)]
Take $c = n - 1$ in theorem \ref{corank3}.
\end{proof}

If we omit the condition that $|Q| \le 4$ in corollary \ref{cerny4},
then we obtain \v{C}ern\'y's conjecture, see \cite{C64} and \cite{volkov}.
\v{C}ern\'y's conjecture plays a central role in the theory of synchronizable
DFAs. \v{C}ern\'y's conjecture has been proved by way of exhaustive search 
for $|Q| = 5$ \cite{10.1007/978-3-319-40946-7_15}, $|Q| = 6$ \cite{BDZ17}, and 
$|Q| = 7$ \cite{BDZ18}.

Pin uses the following theorem in his proof of theorem \ref{corank3}. He does that by way of the case $c = 3$ of corollary \ref{pinappl} below.

\begin{theorem}[J.-E. Pin] \label{pin}
Suppose that $(Q,\Sigma,\ldots)$ is a DFA, and there exist $u \in \Sigma^{*}$
such that $|Qu| \le |Q| - c$. Take any $w \in \Sigma^{*}$ such that 
$|Qw| \le |Q| - c + 1$. Then there exist $m \in \Sigma^{*}$ such that 
$|Qwmw| \le |Q| - c$ and $|m| \le c$.
\end{theorem}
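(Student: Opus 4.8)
The plan is to strip off the easy cases, recast the claim in terms of the fibre partition of $w$, and then isolate the single substantive estimate. \emph{First, reductions.} If $|Qw| \le |Q| - c$ we take $m = \lambda$: then $|Qwmw| = |Qww| \le |Qw| \le |Q| - c$. So assume $|Qw| = |Q| - c + 1$ and put $R = Qw$. If $w$ is not injective on $R$ then $|Rw| < |R|$, and again $m = \lambda$ works, since $|Qwmw| = |Qww| = |Rw| \le |R| - 1 = |Q| - c$. Hence we may assume $w$ restricts to a bijection of $R$; since $Rw = (Qw)w \subseteq Qw = R$, this forces $Rw = R$.

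\emph{Reformulation.} Partition $Q$ into the fibres of $w$, i.e.\ the sets $q'w^{-1} = \{q \in Q : qw = q'\}$ for $q' \in R$: there are $|R| = |Q| - c + 1$ of them, with total surplus $\sum_F (|F| - 1) = |Q| - |R| = c - 1$. Call $T \subseteq Q$ a \emph{transversal} if $|T| = |R|$ and $T$ meets every fibre; then $R$ is one (as $w$ maps $R$ bijectively onto $R$, so $R$ meets each fibre in a single point), every transversal $T$ satisfies $Tw = R$, and for any word $m$,
\[
|Qwmw| = |Rmw| < |R| \quad\Longleftrightarrow\quad Rm \text{ is not a transversal}
\]
(if $Rm$ is a transversal then $Rmw = R$; otherwise $|Rm| < |R|$ or $Rm$ misses a fibre, and in both cases $|Rmw| < |R|$). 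Hence it suffices to find $m$ with $|m| \le c$ and $Rm$ not a transversal.

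\emph{Reduction to the core estimate.} Since $Q$ can be compressed by $c$, pick $w_0$ with $|Qw_0| \le |Q| - c$; then $|Rw_0w| = |Qww_0w| \le |Qw_0| \le |Q| - c < |R|$, so $Rw_0$ is not a transversal and a suitable (a priori long) $m$ exists. Take $m = t_1\cdots t_d$ shortest. By minimality $R, Rt_1, \ldots, Rt_1\cdots t_{d-1}$ are pairwise distinct transversals while $Rt_1\cdots t_d$ is not (any repetition, or an intermediate non-transversal, would yield a shorter solution), so everything comes down to proving $d \le c$: from the transversal $R$ one can reach a non-transversal within $c$ steps.

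\emph{The main obstacle.} I expect all the difficulty to lie in this last estimate. The crude bound $d \le \#\{\text{transversals}\} - 1 \le 2^{c-1} - 1$ is hopeless for $c \ge 4$, so one must use both that the surplus is only $c - 1$ and that corank is at least $c$ — note that so far only corank $\ge c - 1$ has been used. My intended route is induction on $c$: the surplus being positive, fix a non-trivial fibre $F$ and $a \in F \setminus R$, and show that in a bounded number of steps one either already escapes the transversals or reaches a transversal whose effective surplus (with $F$ suitably collapsed) has dropped to $c - 2$, then invoke the inductive hypothesis; the bookkeeping must be arranged so that ``bounded'' is exactly $1$. An alternative is to exploit that $|Qw_0| \le |Q| - c$ exhibits $c$ distinct ``redundant'' states of $Q$ and to follow a carefully chosen length-$c$ prefix of $w_0$, arguing that each of its letters either makes the current image $Rt_1\cdots t_i$ a non-transversal or consumes one unit of surplus. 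Granting $d \le c$, the shortest $m$ above satisfies $|Qwmw| = |Rmw| \le |R| - 1 = |Q| - c$ and $|m| = d \le c$, as required.
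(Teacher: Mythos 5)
Your reductions are correct and economical: disposing of the cases $|Qw|\le|Q|-c$ and ``$w$ not injective on $R$'', reformulating the goal as ``find $m$ with $Rm$ not a transversal of the fibre partition of $w$'', verifying that $R$ itself is a transversal and that a (long) escaping word exists via $w_0$, and observing that a shortest $m$ passes through pairwise distinct transversals. All of that is sound. But the proof stops exactly where the theorem begins: the bound $d\le c$ is the entire content of the statement, and you explicitly grant it rather than prove it. Neither of your two sketches closes the gap. The ``follow a length-$c$ prefix of $w_0$'' route looks unworkable as stated, because the sets $Rt_1\cdots t_i$ bear no controlled relation to $Qt_1\cdots t_i$; there is no reason each letter of $w_0$ should ``consume one unit of surplus'' of the current image. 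The induction on $c$ is not set up: after one step the image is a different transversal, the fibre partition of $w$ has not changed, and it is unclear what quantity is supposed to decrease. Note also that the crude count of transversals, $\prod_F|F|\le 2^{c-1}$, already fails to give $d\le c$ at $c=3$, which is precisely the case this paper needs (Lemma 2.3 requires $|m|\le 3$ to stay within $9$ steps), so the estimate cannot be waved through even in the small cases.

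For comparison: the paper does not prove this theorem either; it cites Proposition 5 of Pin's \emph{Utilisation de l'alg\`ebre lin\'eaire en th\'eorie des automates}, and the name of that reference is the hint. The standard argument is linear-algebraic: in your notation, $Rm$ is a transversal iff $\langle [R], M_m[F]\rangle=1$ for every fibre $F$, where $M_m[F]$ is the characteristic vector of $Fm^{-1}$ and $M_{sm}=M_sM_m$. One considers the chain of subspaces $U_k=\mathrm{span}\{M_m[F]: F \mbox{ a fibre},\ |m|\le k\}$, which starts at dimension $|R|=|Q|-c+1$ (the fibre vectors are independent), stabilizes as soon as it stops growing, and must keep growing strictly while every $Rm$ with $|m|\le k$ is still a transversal (otherwise the affine constraint $\langle [R],\cdot\rangle=1$ would propagate to all words, contradicting the existence of $w_0$). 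Since the ambient dimension is $|Q|$, the chain can grow at most $c-1$ further times, and the affine bookkeeping yields $d\le c$. Your transversal reformulation is exactly the right interface to this argument, but you would need to supply it (or an equivalent) for the proof to stand.
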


Theorem \ref{pin} is slightly more explicit than Proposition 5 in \cite{pinlinalg}.

\begin{corollary} \label{pinappl}
Suppose that $(Q,\Sigma,\ldots)$ be a DFA, and there exist $u \in \Sigma^{*}$ such that $|Qu| \le |Q| - c$. Take $w \in \Sigma^{*}$, such that $|Qw| \le |Q| - c + 1$. If $|w| \le (c-1)c/2$, then we can choose $u$ of the form $u = wm$ such that $|u| \le c^2$.
\end{corollary}

\begin{proof}
We can use theorem \ref{pin} (or theorem \ref{franklpin} below), to infer that there exist $m \in \Sigma^{*}$, such that $|Qwm| \le |Q| - c$ and $|m| \le c(c+1)/2$. Hence $|wm| \le c^2$ if $|w| \le (c-1)c/2$.
\end{proof}

Using corollary \ref{pinappl}, it is possible to prove the case $c \le 2$ of theorem \ref{corank3} by induction on $c$. But just as Pin, we will only use the case $c = 3$ of corollary \ref{pinappl} in the proof of theorem \ref{corank3}. For the subcase of the case $c = 3$ of theorem \ref{corank3} which is not covered by corollary \ref{pinappl}, we give another proof. 

The following result was not known yet when Pin wrote his thesis. 

\begin{theorem}[P. Frankl and J.-E. Pin] \label{franklpin}
Suppose that $(Q,\Sigma,\ldots)$ is a DFA, and there exist $u \in \Sigma^{*}$
such that $|Qu| \le |Q| - c$. Take $R \subseteq Q$ such that 
$|R| \le |Q| - c + 1$. Then there exist $m \in \Sigma^{*}$ such that 
$|Rm| \le |Q| - c$ and $|m| \le c(c+1)/2$.
\end{theorem}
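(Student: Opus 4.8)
The plan is to reduce to the only nontrivial case and then run the classical Frankl--Pin argument with a shortest word. First I would dispose of the easy case: if $|R| \le |Q| - c$ then the empty word already works, so assume $|R| = |Q| - c + 1$ and write $d = |Q| - |R| = c - 1$ for the ``deficiency'' of $R$. In this case $R$ is automatically compressible: taking the hypothesised $w$ with $|Qw| \le |Q| - c$ we get $|Rw| \le |Qw| \le |Q| - c < |R|$. So the statement becomes: a compressible subset of deficiency $d$ can be compressed to a strictly smaller set by a word of length at most $\binom{d+2}{2} = c(c+1)/2$.

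To prove this I would pick a word $w = s_1 s_2 \cdots s_m$ of minimal length with $|Rw| < |R|$ and extract structure from its minimality. Writing $R_i = R s_1 \cdots s_i$, minimality forces $|R_i| = |R|$ for $0 \le i \le m-1$ and $|R_m| < |R|$; moreover the sets $R_0, \ldots, R_{m-1}$ are pairwise distinct, by a splicing argument (if $R_i = R_j$ with $i < j \le m-1$, then $s_1 \cdots s_i s_{j+1} \cdots s_m$ is a shorter compressing word). The letter $s_m$ identifies two states $p, q$ of $R_{m-1}$; tracing this pair backwards through the bijections $s_i \colon R_{i-1} \to R_i$ yields distinct states $p_i \ne q_i$ in each $R_i$ with $p_i s_{i+1} = p_{i+1}$ and $q_i s_{i+1} = q_{i+1}$. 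The goal is then to bound $m$. I would attach to level $i$ a $2$-element set $A_i$ built from the traced pair (readjusting which mergeable pair is followed at each level, again using minimality of $w$, so as to rule out short-circuits) together with the $d$-element set $B_i = Q \setminus R_i$, and verify the cross-intersecting property that $A_i \cap B_j = \emptyset$ precisely when $i = j$. A Frankl-type inequality for such a pair of families of $2$-sets and $d$-sets then yields $m \le \binom{d+2}{2}$. Equivalently, the same bound can be packaged as an induction on $d$, peeling one level of the identification off at a time and passing to a deficiency-$(d-1)$ instance.

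The main obstacle is precisely the cross-intersecting property $A_i \cap B_j = \emptyset \iff i = j$, equivalently (for $j \ne i$) that the chosen pair $A_i$ is never contained in $R_j$. Verifying this is the combinatorial heart of the argument: it requires choosing, at each level, the ``right'' mergeable pair to follow so that a short-circuit $A_i \subseteq R_j$ would contradict the minimal length of $w$, and the cases $i < j$ and $i > j$ must be treated separately. Keeping the exact cardinalities under control so the bound comes out as $\binom{d+2}{2} = c(c+1)/2$ rather than something weaker is also part of the bookkeeping. By contrast, the reduction to $|R| = |Q| - c + 1$ and the structural facts about a minimal-length $w$ are routine.
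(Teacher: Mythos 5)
The paper gives no proof of Theorem \ref{franklpin}; it only cites Pin's Proposition 3.1 together with Frankl's bound $p(s,t)=\binom{s+t}{t}$, and your outline reconstructs exactly that argument: reduce to $|R|=|Q|-c+1$, take a minimal compressing word $w=s_1\cdots s_m$, pull the merged pair back through the bijections $s_i\colon R_{i-1}\to R_i$ to get $2$-sets $A_i\subseteq R_i$, set $B_i=Q\setminus R_i$ with $|B_i|=c-1$, and bound $m$ by Frankl's inequality, giving $m\le\binom{(c-1)+2}{2}=c(c+1)/2$. One correction to what you call the combinatorial heart: the two-sided Bollob\'as condition ``$A_i\cap B_j=\emptyset$ precisely when $i=j$'' is both more than minimality delivers and more than you need. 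Minimality only rules out $A_i\subseteq R_j$ for $j<i$ (splicing $s_1\cdots s_j$ with $s_{i+1}\cdots s_m$ gives length $j+m-i<m$); for $j>i$ the spliced word is longer, no contradiction arises, and no ``readjustment'' of which pair you follow will force $A_i\not\subseteq R_j$ in general. So you should aim only for the skew condition $A_i\cap B_i=\emptyset$ and $A_i\cap B_j\ne\emptyset$ for $j<i$ --- which is exactly the hypothesis of Frankl's theorem in \cite{frankl} and already yields $m\le\binom{s+t}{t}$. With that adjustment your plan is the intended proof; as written, insisting on the biconditional would leave you trying to prove something false.
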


Theorem \ref{franklpin} follows from a conjecture and Proposition 3.1 in 
\cite{pintwoprobl}.
The conjecture is that $p(s,t) = \binom{s+t}{t}$, where $p(s,t)$ is defined 
by a property in \cite{pintwoprobl}, and it has been proved by P. Frankl in 
\cite{frankl}.

\begin{corollary}[J.-E. Pin] \label{franklpincor}
Suppose that $(Q,\Sigma,\ldots)$ is a DFA, and there exist $u \in \Sigma^{*}$
such that $|Qu| = 1$. If $n = |Q| \ge 4$, then we can choose $u$ such that 
$|u| \le (n^3-n)/6 - 1$.
\end{corollary}

\begin{proof}[Proof (J.-E. Pin)]
Suppose that $n \ge 4$. From theorem \ref{corank3}, it follows that there exist $w \in \Sigma^{*}$ such that $|Qw| \le n - 3$ and $|w| \le 9$. By applying theorem 
\ref{franklpin} for $c = 4, 5, \ldots, n-1$, in that order, the reader can prove 
that there exist $v \in \Sigma^{*}$, such that $|Qwv| = 1$ and 
$$
|v| \le \sum_{c=4}^{n-1} c(c+1)/2 = (n^3-n)/6 - 10
$$
Hence we can take $u = wv$.
\end{proof}

Various authors attribute a weaker version of corollary \ref{franklpincor} to 
J.-E. Pin, namely with $|u| \le (n^3-n)/6 - 1$ replaced by $|u| \le (n^3-n)/6$. 
But this weaker result has never been claimed by J.-E. Pin.

In the last section, we will characterize the situations where the
word $w$ of theorem \ref{corank3} cannot be obtained by trying all words
of greedy type. To define words of greedy type, we first need to define minimal compression words. A word $w$ is a \emph{compression word for $R \subseteq Q$} if $|Rw| < |R|$, and a \emph{minimal compression word for $R$} if $w$ has minimal length as such. A word $u$ is \emph{of greedy type} for $R$ if either $u = \lambda$ or $u = wv$, such that $w$ is a minimal compression word for $R$, and $v$ is of greedy type for $Rw$.

If $c \le 2$ and the conditions of theorem \ref{corank3} are satisfied, then greedy compression will always give you a word which satisfies theorem \ref{corank3}. But this does not hold for $c = 3$. Theorem \ref{greedy} (1) expresses being in the situation where greedy compression cannot give you a word which satisfies theorem \ref{corank3}. With theorem \ref{greedy} (2), (3) and (4), three equivalent characterizations are given. Corollary \ref{pincor} gives a word which does satisfy theorem \ref{corank3} in this situation. In Pin's proof of theorem \ref{corank3}, Pin proves theorem \ref{greedy} (1) $\Rightarrow$ (3) and corollary \ref{pincor}. Pin proves theorem \ref{greedy} (2) $\Rightarrow$ (3) more or less, too.

\section{Proof of theorem \ref{corank3}}

We will essentially follow Pin's proof in this section, starting with the case $c \le 2$. If $c = 0$, then we take for $w$ the empty word $\lambda$. So assume that $1 \le c \le 2$. Then there exists an $x \in \Sigma$ such that $Qx \ne Q$. If $c = 1$, then we take $w = x$. So assume that $c = 2$. We may assume without loss of generality that $1x = 2x$. If $\{1,2\} \subseteq Qx$, then $|Qx^2| \le n-2$. So we may assume without loss of generality that $Qx = Q \setminus \{1\}$. 

There exist $y \in \Sigma$ such that $Qxy \ne Qx$. If $\{1,2\} \subseteq Qxy$ for some $y \in \Sigma$, then $|Qxyx| \le n-2$. So we may assume without loss of generality that $Qxy = Q \setminus \{2\}$ for some $y \in \Sigma$. Since $c \ge 2$, there exist $z \in \Sigma$ such that $Qxyz \notin \{Qx, Qxy\}$. So we may assume without loss of generality that $Qxyz = Q \setminus \{3\}$ for some $z \in \Sigma$. As $\{1,2\} \subseteq Q \setminus \{3\}$, $|Qxyzx| \le n-2$ follows, which proves the case $c = 2$.

Suppose that $Q = \{1,2,\ldots,n\}$, and let $(Q,\Sigma,\ldots)$ be a DFA, such that $|Qw| \ge n-1$ for all $w \in \Sigma^{*}$ such that $|w| \le 3$. Then we can follow the case $c = 2$ of the above proof, to see that the last assumption excludes all cases except the last case. So we can renumber the state set, and choose a word $xyzx$, such that this word gives the following path from $Q$ in the power automaton, where $1 \ne 3x$.
\begin{center}
\begin{tikzpicture}[semithick]
\tikzstyle{nodestyle}=[draw,fill=white,circle,inner sep=0pt,minimum width=1cm]
\fill[lightgray,rounded corners=5mm] (-1,-1) rectangle (1,1); 
\fill[lightgray,rounded corners=5mm] (2,-1) rectangle (8,1); 
\fill[lightgray,rounded corners=5mm] (9,-1) rectangle (11,1); 
\node[nodestyle] (A) at (0,0) {$Q$};
\node[nodestyle] (B) at (3,0) {$\begin{array}{@{}c@{}} Q \setminus {} \\ \{1\} \end{array}$};
\node[nodestyle] (C) at (5,0) {$\begin{array}{@{}c@{}} Q \setminus {} \\ \{2\} \end{array}$};
\node[nodestyle] (D) at (7,0) {$\begin{array}{@{}c@{}} Q \setminus {} \\ \{3\} \end{array}$};
\node[nodestyle] (E) at (10,0) {$\begin{array}{@{}c@{}} Q \setminus {} \\ \{1,3x\} \end{array}$};
\draw[->] (A) edge node[above] {$x$} (B) (B) edge node[above] {$y$} (C) 
(C) edge node[above] {$z$} (D) (D) edge node[above] {$x$} (E);
\end{tikzpicture}
\end{center}
Here, the shaded blocks correspond to cardinalities of the state set. We call a DFA as above, such that $|Qw| \ge n-1$ for all $w \in \Sigma^{*}$ such that $|w| \le 3$, a \emph{Pin DFA}.

\begin{lemma}[J.-E. Pin] \label{pinlemma}
Suppose that $(Q,\Sigma,\ldots)$ is a Pin DFA. Take $s \in \Sigma$. Then $q \mapsto qs$ is either a bijection from $Q \setminus \{1\}$ to $Q \setminus \{1\}$, or a bijection from $Q \setminus \{1\}$ to $Q \setminus \{2\}$. Furthermore, $s$ satisfies one of the following two equations.
\begin{align}
Qs &= Q & 1s &\in \{1,2\} \label{ad} \\
Qs &= Q \setminus \{1\} & 1s &= 2s \label{b1}
\end{align}
In case of \eqref{ad}, $q \mapsto q$ is a bijection from $Q$ to $Q$. In case of \eqref{b1}, $q \mapsto qs$ is a bijection from $Q \setminus \{1\}$ to $Q \setminus \{1\}$, and a bijection from $Q \setminus \{2\}$ to $Q \setminus \{1\}$.
\end{lemma}

\begin{proof} Take $s \in \Sigma$. Suppose first that $Qs = Q$. Then 
$$
Qxs = (Q \setminus \{1\})s = Q \setminus \{1s\}
$$
From $|Qxsx| = n-1$, we deduce that $\{1,2\} \nsubseteq Qxs$, so $1s \in \{1,2\}$. The other claims follow as well.

Suppose next that $Qs \ne Q$. Then $|Qs| = n - 1$. From $|Qsx| = n-1$, we deduce that $\{1,2\} \nsubseteq Qs$, so either $Qs = Q \setminus \{1\}$ or $Qs = Q \setminus \{2\}$. From $|Qszx| = n-1 \ne |Qxyzx|$, we infer that $Qs \ne Qxy = Q \setminus \{2\}$, so 
$Qs = Q \setminus \{1\}$. If we combine $Qs = Q \setminus \{1\}$ with
$$
|(Q \setminus \{1\})s| = |Qxs| = n - 1 = |Qxys| = |(Q \setminus \{2\})s|
$$
then $(Q \setminus \{1\})s = Q \setminus \{1\} = (Q \setminus \{2\})s$ follows. This gives all claims except $1s = 2s$. The last claim subsequently gives $1s = 2s$.
\end{proof}

From $1 \in Q \setminus \{2\} = Qxy$ and $1 \in Q \setminus \{3\} = Qxyz$, it follows that $Qy \ne Q \setminus \{1\}$ and $Qz \ne Q \setminus \{1\}$ respectively. On account of lemma \ref{pinlemma} above,
\begin{align}
Qy &= Q & 1y &= 2 \label{a1} \\
Qz &= Q & 2z &= 3 \label{d1}
\end{align}

We continue with the proof of the case $c = 3$ of theorem \ref{corank3}. If $|Qw| \le |Q| - 2$ for some $w \in \Sigma^{*}$ such that $|w| \le 3$, then theorem \ref{corank3} follows from corollary \ref{pinappl}.  So assume that $|Qw| \ge |Q| - 1$ for all $w \in \Sigma^{*}$ such that $|w| \le 3$. Then up to renumbering states, $(Q,\Sigma,\ldots)$ is a Pin DFA. So the following two lemmas prove this case. We postpone the proof of the first lemma to the next section.

\begin{lemma} \label{pq}
Assume that $(Q,\Sigma,\ldots)$ is a Pin DFA. Let $p,q \in Q$ such that $1 \ne p \ne q \ne 1$. Suppose that
$$
|(Q \setminus \{1,p\})m| = n-2 = |(Q \setminus \{1,q\})m| 
$$
for all $m \in \Sigma^{*}$ such that $|m| \le 4$. Then $|Qw| \ge n-2$ for all $w \in \Sigma^{*}$.
\end{lemma}

\begin{lemma} \label{reach}
Assume that $(Q,\Sigma,\ldots)$ is a Pin DFA. Then there exist $p \in Q$ with $1 \ne p \ne 3x$, and $w \in \Sigma^5$, such that $Qw = Q \setminus \{1,p\}$.
\end{lemma}

\begin{proof}
From $(Q \setminus \{1\}) x = Q \setminus \{1\}$, we infer that
$$
Qxyzx^2 = (Q \setminus \{1,3x\}) x = Q \setminus \{1,3x^2\}
$$
So if $3x^2 \ne 3x$, then we can take $p = 3x^2$. So assume that $3x^2 = 3x$. From $(Q \setminus \{1\})x = Q \setminus \{1\}$ and $3x \ne 1 \ne 3$, we infer that $3x = 3$. We distinguish two cases.
\begin{itemize}

\item $1z = 1$ or $3z = 1$.

As $1z \ne 3z$, there exists a $p \in Q \setminus \{1\}$ such that $\{1,3\}z = \{1,p\}$. From $2z = 3$ and $1z \ne 2z \ne 3z$, $p \ne 3$ follows. As $3x = 3$ and $Qz = Q$,
$$
Qxyzxz = (Q \setminus \{1,3\}) z = Q \setminus \{1,p\}
$$
and the conclusion follows.

\item $1z \ne 1$ and $3z \ne 1$.

As $1z \in \{1,2\}$, $1z = 2$ follows. Furthermore, $3z \notin \{1,1z,2z\} = \{1,2,3\}$, so we may assume that $3z = 4$. From $Qz = Q$ and $(Q \setminus \{1\}) x = Q \setminus \{1\}$, we infer that
$$
Qxyz^2x = (Q \setminus \{3\})zx = (Q \setminus \{4\})x = Q \setminus \{1,4x\}
$$
Furthermore, $3x \ne 4x$, so we can take $p = 4x$. \qedhere

\end{itemize}
\end{proof}

\section{Proof of lemma \ref{pq}}

Take $p' = py$ and $q' = qy$.

\begin{lemma} \label{pqlem}
\begin{enumerate}[\upshape(i)]

\item If $s \in \Sigma$ such that $(Q \setminus \{1\})s = Q \setminus \{1\}$, then
$$
\{p,q\}s = \{p,q\}
$$
Furthermore, $\{p,q\} yz = \{1,2\}$.

\item If $s \in \Sigma$ such that $(Q \setminus \{1\})s = Q \setminus \{2\}$, then
$$
\{p,q\}s = \{p',q'\}
$$
Furthermore, $\{p',q'\} z = \{1,2\}$.

\item $1 \in \{p',q'\}$ and $2 \in \{p,q\}$.

\end{enumerate}
\end{lemma}

\begin{proof}
\begin{enumerate}[(i)]

\item Take $s \in \Sigma \cup \{\lambda\}$ such that $(Q \setminus \{1\})s = Q \setminus \{1\}$. As $Qyz = Q$, there exists a pair $\{d,e\} \subseteq Q$, such that $\{d,e\}yz = \{1,2\}$. As $1yz = 2z = 3 \notin \{1,2\}$, $1 \notin \{d,e\}$ follows. So there exists a pair $\{a,b\} \subseteq Q \setminus \{1\}$, such that 
$$
\{a,b\}s = \{d,e\}
$$
As $|\{a,b\}m| = 1$ for some $m \in \Sigma^{*}$ such that $|m| \le 4$, $\{a,b\} \nsubseteq Q \setminus \{1,p\}$ follows. Similarly, $\{a,b\} \nsubseteq Q \setminus \{1,q\}$ follows. This is only possible if $\{a,b\} = \{p,q\}$. Taking $s = \lambda$ gives $\{p,q\} = \{p,q\}\lambda = \{d,e\}$. So $\{p,q\}s = \{p,q\}$.

\item Take $s \in \Sigma$ such that $(Q \setminus \{1\})s = Q \setminus \{2\}$. As $Qz = Q$, there exists a pair $\{d',e'\} \subseteq Q$, such that $\{d',e'\}z = \{1,2\}$. As $2z = 3 \notin \{1,2\}$, $2 \notin \{d',e'\}$ follows. So there exists a pair $\{a,b\} \subseteq Q \setminus \{1\}$, such that
$$
\{a,b\}s = \{d',e'\}
$$
Just as in the proof of (i), $\{a,b\} = \{p,q\}$ can be obtained. Taking $s = y$ gives $\{p',q'\} = \{p,q\}y = \{d',e'\}$. So $\{p,q\}s = \{p',q'\}$.

\item As $Qz = Q$, we infer from $\{p',q'\}z = \{1,2\}$ and $1z \in \{1,2\}$ that $1 \in \{p',q'\}$. 

Suppose that $2 \notin \{p,q\}$. As $(Q \setminus \{1\})y = Q \setminus \{2\}$, there exists a pair $\{a,b\} \subseteq Q \setminus \{1\}$, such that
$$
\{a,b\}y = \{p,q\}
$$
From $\{p,q\}yz = \{1,2\}$, it follows that $\{a,b\} = \{p,q\}$ can be obtained as before. So $1 \in \{p',q'\} = \{p,q\}y = \{p,q\}$. Contradiction, so $2 \in \{p,q\}$. \qedhere

\end{enumerate}
\end{proof}

So we have the following transition graph of the power automaton.
\begin{center}
\begin{tikzpicture}[semithick]
\tikzstyle{nodestyle}=[draw,fill=white,circle,inner sep=0pt,minimum width=12mm]
\fill[lightgray,rounded corners=5mm] (-2,-1) rectangle (5,1); 
\node[nodestyle] (de) at (0,0) {$\{p,q\}$};
\node[nodestyle] (fg) at (2,0) {$\{p',q'\}$};
\node[nodestyle] (hi) at (4,0) {$\{1,2\}$};
\draw[->] (de) edge[out=-150,in=150,looseness=5] node[left] {$x'$} (de);
\draw[->] (de) edge node[above] {$y'$} (fg) (fg) edge node[above] {$z$} (hi);
\draw (hi) edge ++(1.2,0);
\node at (7.5,0) {$\begin{aligned} (Q \setminus \{1\}) x' &= Q \setminus \{1\} \\
(Q \setminus \{1\}) y' &= Q \setminus \{2\} \end{aligned}$};
\end{tikzpicture}
\end{center}

\begin{lemma} \label{pqeq}
\begin{align*} 
\{p,q\} &= \{2,3\} & \{p',q'\} = \{1,3\}
\end{align*}
\end{lemma}

\begin{proof}
From lemma \ref{pqlem} (iii), it follows that we can take $e \in Q$ and $c' \in Q$, such that  $\{p,q\} = \{2,e\}$ and $\{p',q'\} = \{1,c'\}$. We must show that $e = 3 = c'$. For that purpose, we distinguish two cases.
\begin{itemize}

\item $1z = 1$.

Then lemma \ref{pqlem} (i) with $s = z$ gives $\{p,q\}z = \{p,q\}$, so $\{2,e\}z = \{2,e\}$. As $2z = 3$, $e = 3$ follows. Furthermore, $ez = 2$ is obtained. As $e = 3$ and $Qz = Q$, it suffices to show that $c'z = 2$ as well. This follows from $\{1,c'\}z = \{p',q'\} z = \{1,2\}$ and $1z = 1$.

\item $1z = 2$.

Then lemma \ref{pqlem} (ii) with $s = z$ gives $\{p,q\}z = \{p',q'\}$, so $\{2,e\}z = \{1,c'\}$. As $2z = 3$, $c' = 3$ follows. Furthermore, $ez = 1$ is obtained. As $c' = 3$ and $Qz = Q$, it suffices to show that $c'z = 1$ as well. This follows from $\{1,c'\}z = \{p',q'\} z = \{1,2\}$ and $1z = 2$.
\qedhere

\end{itemize}
\end{proof}

Now the following lemma proves lemma \ref{pq}.

\begin{lemma}
For all $s \in \Sigma$,
\begin{align*}
\{1,2,3\}s &\subseteq \{1,2,3\} & (Q \setminus \{1,2,3\}) s &= Q \setminus \{1,2,3\}
\end{align*}
\end{lemma}

\begin{proof}
The first equality follows from the second equality if $Qs = Q$. So we only need to prove the first equality if $1s = 2s$. We distinguish two cases.
\begin{itemize}

\item $(Q \setminus \{1\})s = Q \setminus \{1\}$. 

From lemma \ref{pqlem} (i), it follows that $\{p,q\}s = \{p,q\}$, so $\{2,3\}s = \{2,3\}$ on account of lemma \ref{pqeq}. Hence
$$
(Q \setminus \{1,2,3\})s = \big((Q \setminus \{1\}) \setminus \{2,3\} \big)s = (Q \setminus \{1\}) \setminus \{2,3\} = Q \setminus \{1,2,3\}
$$
Furthermore, $\{1,2,3\}s = \{2,3\}s = \{2,3\}$ if $1s = 2s$.

\item $(Q \setminus \{1\})s = Q \setminus \{2\}$. 

From lemma \ref{pqlem} (ii), it follows that $\{p,q\}s = \{p',q'\}$, so $\{2,3\}s = \{1,3\}$ on account of lemma \ref{pqeq}. Hence
$$
(Q \setminus \{1,2,3\})s = \big((Q \setminus \{1\}) \setminus \{2,3\} \big)s = (Q \setminus \{2\}) \setminus \{1,3\} = Q \setminus \{1,2,3\}
$$
Furthermore, $\{1,2,3\}s = \{2,3\}s = \{1,3\}$ if $1s = 2s$. \qedhere

\end{itemize}
\end{proof}

The state names $d$ and $e$ were taken from Pin's thesis. The same holds for the claims 
$1 \notin \{d,e\}$, $2 \notin \{d',e'\}$ and $1 \in \{d',e'\}$. On account of the last claim, we can write $\{d',e'\} = \{1,c'\}$. Pin used the following state naming in his thesis.
\begin{center}
\begin{tikzpicture}[x=6mm,y=6mm]
\fill[lightgray!50] (-3.4,0) rectangle (0,2);
\foreach[count = \n] \s in {1,2,3,d,e,c'} {
  \node[anchor=base] at (\n-0.5,1.3) {$\s$};
  \draw (\n-1,1) -- ++ (0,1);
}  
\foreach[count = \n] \s in {a,b,s,d,e,c} {
  \node[anchor=base] at (\n-0.5,0.3) {$\s$};
  \draw (\n-1,0) -- ++ (0,1);
}
\node[anchor=base east] at (0,1.3) {This paper};
\node[anchor=base east] at (0,0.3) {Pin's thesis};
\draw (-3.4,1) -- (6,1) (-3.4,0) rectangle (6,2);
\node[anchor=base] at (9.3,1.3) {$\{d',e'\} = \{1,c'\}$};
\end{tikzpicture}
\end{center}

Notice that the symbols $x, y, z$ do not suffice for the word $w$ of theorem \ref{corank3}, if and only if $\{1,2,3\} x = \{2,3\}$ and $\{1,2,3\}y = \{1,2,3\} = \{1,2,3\}z$. In that case, an extra symbol $s$ such that $\{1,2,3\} s \nsubseteq \{1,2,3\}$ is necessary, but only on one spot in the word $w$ of theorem \ref{corank3}. Symbol $s$ will provide a transition away from some set $R \in \big\{Q \setminus \{1,3\},Q \setminus \{1,2\}\big\}$ as follows. If $(Q \setminus \{1\}) s = Q \setminus \{1\}$, then $Rs$ contains $\{2,3\}$. If $(Q \setminus \{1\}) s = Q \setminus \{2\}$, then $Rs$ contains $\{1,3\}$. 

If $1s = 1$ and $2s = 2$, then $s$ cannot replace one of the symbols $x,y,z$, even if we allow renumbering of the states. So four symbols may be necessary for the word $w$ of theorem \ref{corank3}. Conversely, the necessarity of four symbols does \emph{not} imply that $2s = 2$, which the reader may show. If four symbols are necessary, then $3y = 3$ and $1z = 1$, because we cannot choose $y$ and $z$ to be the same symbol.

\section{Theorem \ref{corank3} and greedy compression}

Let $(Q,\Sigma,\ldots)$ be a DFA, and suppose that there exist $w \in \Sigma$, such
that $|Qw| \le n - 3$. Then we can choose $w$ of greedy type, which we do.
From theorem \ref{franklpin}, it follows that there exists an $i \le 3$, 
such that
\begin{align*}
|Qw_1| = |Qw_1w_2| = \cdots = |Qw_1w_2 \cdots w_i| &= n - 1 \\
\intertext{and a $j \le 6$, such that}
|Qw_1w_2 \cdots w_{i+1}| = |Qw_1w_2 \cdots w_{i+2}| = \cdots =
|Qw_1w_2 \cdots w_{i+j}| &= n - 2
\end{align*}
Consequently,
\begin{enumerate}

\item[(i)] If $|w| \ge 4$, then $|Qw_1w_2w_3w_4| \le n - 2$;

\item[(ii)] If $|w| \ge 10$, then $|Qw_1w_2 \cdots w_{10}| \le n - 3$.

\end{enumerate}
Due to (i) above, theorem \ref{greedy} (1) below expresses that $w$ cannot 
be chosen of greedy type.

\begin{theorem} \label{greedy}
Suppose that $(Q,\Sigma,\ldots)$ is a DFA, such that $|Qw| = n - 3$ for some word 
$w$ of length at most $9$. Then the following statements are equivalent:
\begin{enumerate}[\upshape (1)]

\item For every choice of $w$, we have $|w| \ge 4$ and $|Qw_1w_2w_3w_4| \ge n - 1$;

\item There is a numbering of the states, such that $(Q,\Sigma,\ldots)$ is a Pin DFA,
such that 
$$
|(Q \setminus \{1,3x\})m| = n-2
$$
for all $m \in \Sigma^{*}$ with $|m| \le 5$. Furthermore, the following holds if $3x = 3$, $3zx = 3z$ and $3z^2 = 1$: there does \emph{not} exist a symbol $s \in \Sigma$ such that
\begin{align*}
Qs &= Q & 1s &= 1 & 2s &= 3z & 3s &= 3 & (3z)s &= 2 
\end{align*}

\item There is a numbering of the states, such that every $s \in \Sigma$ satisfies one of the following three equations:
\begin{align}
Qs &= Q & 1s &= 1 & 2s &= 2 & 3s &= 3 & 4s &= 4 \label{i} \\
Qs &= Q & 1s &= 2 & 2s &= 3 & 3s &= 4 & 4s &= 1 \label{a} \\
Qs &= Q \setminus \{1\} \hspace*{-50pt} &&& \hspace*{-50pt} 
1s = 2s && 3s &= 3 & 4s &= 4 \label{b}
\end{align}

\item For every choice of $w$, we have $|w| = 9$ and
\begin{align*}
|Qw_1| = |Qw_1w_2| = |Qw_1w_2w_3| = |Qw_1w_2w_3w_4| &= n - 1 \\
|Qw_1w_2w_3w_4w_5| = \cdots = |Qw_1w_2w_3w_4w_5w_6w_7w_8| &= n - 2
\end{align*}

\end{enumerate}
\end{theorem}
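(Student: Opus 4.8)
The plan is to prove Theorem~\ref{greedy} by establishing the cycle of implications $(3)\Rightarrow(4)\Rightarrow(1)\Rightarrow(2)\Rightarrow(3)$, exploiting the rigid structure forced by Lemma~\ref{corank2}. Throughout I would keep in mind the two facts distilled from the greedy discussion above: any greedy word $w$ with $|Qw|\le n-3$ spends at most three steps at level $n-1$ and at most six steps at level $n-2$, so $|w|\le 9$ always, and if $|w|\ge 4$ then $|Qw_1w_2w_3w_4|\le n-2$.

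For $(3)\Rightarrow(4)$: assuming the three equation types \eqref{i}, \eqref{a}, \eqref{b}, I would check directly that $\{a,b,d\}$-type letters act on $\{1,2,3,4\}$ exactly as in Lemma~\ref{corank2} (so $(Q,\Sigma,\ldots)$ is of that form), and then show that the set $X=\{1,1a,1a^2,\ldots\}$ has size $4$, placing us in case (i) of the proof of Lemma~\ref{X}. The point is that under \eqref{i}--\eqref{b} the states $\{1,2,3,4\}$ form an invariant block on which $a$ acts as the $4$-cycle, $b$ identifies $1,2$, and every other letter fixes them; hence no letter ever compresses a pair outside this block, every compression step must be a ``$b$'' applied to a set containing $\{1,2\}$, and one verifies that starting from $Q$ one is forced through exactly the level sequence $n-1,n-1,n-1,n-1,n-2,\dots,n-2$ before reaching $n-3$, of total length $9$. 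This is essentially a bookkeeping argument on how the block $\{1,2,3,4\}$ shrinks, using Lemma~\ref{corank2}~(iii) to control sizes.

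For $(4)\Rightarrow(1)$ the implication is immediate from the displayed level sequence, since $|Qw_1w_2w_3w_4|=n-1>n-2$. For $(1)\Rightarrow(2)$: statement $(1)$ forces $Q$ to require more than $3$ steps to reach size $n-2$ (otherwise a shorter greedy $w$ with $|w|\le 3$ reaching $n-3$ via Lemma~\ref{pinappl} would give $|w|<4$, contradiction), so Lemma~\ref{corank2} applies and $(Q,\Sigma,\ldots)$ has that shape; the second clause of $(2)$ is then just a special case of the universally quantified clause in $(1)$. The substantive implication is $(2)\Rightarrow(3)$, which I expect to be the main obstacle: here one must show that the mere failure of one specific greedy-type word (the one with $w_1=w_4=b$, $w_2=a$) to compress in $4$ steps propagates to the rigid global form \eqref{i}--\eqref{b} for \emph{every} letter. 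I would argue by contraposition along the lines of the case analysis in the proof of Lemma~\ref{X}: if some letter $s$ violated all of \eqref{i}--\eqref{b}, then either $|X|\ge 3$, or $2d\ne 3$, or $3a\ne 3$, or some letter moves a state of $Q\setminus\{1,2,3\}$ into $\{1,2,3\}$; in each of these sub-cases the constructions of Lemma~\ref{X}~(i)--(iii) (and the ``$s$ may be $b$'' remark after case (iv)) actually produce a compression to size $n-3$ in which $Qw_1w_2w_3w_4$ already has size $n-2$, with $w_1=w_4=b$, $w_2=a$, contradicting $(2)$. Thus the only surviving possibility is $|X|=2$, $2d=3$, $3a=3$ with $\{1,2,3\}$ invariant and no letter leaking $Q\setminus\{1,2,3\}$ inward — but that configuration was shown in case (iv) to be incompatible with compressibility to $n-3$ unless $\{1,2,3,4\}$ behaves as in \eqref{i}--\eqref{b}, which is exactly $(3)$. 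The delicate part is checking that in each bad sub-case the witnessing word can genuinely be taken of the prescribed greedy shape $b,a,\ast,b,\dots$ with the fourth prefix at level $n-2$; this requires re-reading the four constructions of Lemma~\ref{X} and, where necessary, reordering or re-choosing the first few letters so that the ``$b\,a\,\ast\,b$'' pattern with $|Qb\,a\,\ast\,b|=n-2$ is visible.
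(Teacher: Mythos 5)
Your overall architecture --- the cycle $(3)\Rightarrow(4)\Rightarrow(1)\Rightarrow(2)\Rightarrow(3)$ --- is a legitimate reorganization of the paper's argument (the paper instead shows that $(3)$ implies each of the other three and that $\neg(3)$ refutes each of them), and your treatments of $(3)\Rightarrow(4)$, $(4)\Rightarrow(1)$ and $(1)\Rightarrow(2)$ are sound in outline. The problem is the step you yourself flag as the main obstacle, $(2)\Rightarrow(3)$, where the plan contains a concrete error. You propose to rule out, one by one, the branches ``$|X|\ge 3$'', ``$2d\ne 3$'', ``$3a\ne 3$'' and ``some letter leaks $Q\setminus\{1,2,3\}$ into $\{1,2,3\}$'', and to conclude that the only surviving configuration is the case-(iv) situation of the proof of lemma \ref{X} ($|X|=2$, $2d=3$, $3a=3$, with $\{1,2,3\}$ invariant), which you then identify with $(3)$. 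But the configuration described by $(3)$ has $X=\{1,1a,1a^2,\ldots\}=\{1,2,3,4\}$, i.e.\ $|X|=4$: it lives in case (i) of the proof of lemma \ref{X} (with $d=a$, $q=2a=3$ and $qb=3b=3$), not in case (iv). So ``$|X|\ge 3$'' cannot be a branch you refute --- it contains the very configuration you are trying to reach --- and the reduction to the case-(iv) picture is wrong.

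Relatedly, the genuinely hard content of $(2)\Rightarrow(3)$ is missing from the sketch. In cases (ii)--(iv) of the proof of lemma \ref{X}, and in case (i) with $qb\ne 3$, the compressing word constructed there already begins with $badb$ and satisfies $|Qbadb|=n-2$, so it directly witnesses $\neg(2)$; no reordering or re-choosing of the first letters is needed (or possible) there. The one situation where this fails is case (i) with $qb=3$, where the constructed word is $ba^3b\cdots$ (fourth letter $a$, not $b$); there one must prove from scratch that $(2)$ forces $|X|=4$ and forces \emph{every} letter $s$ into one of the forms \eqref{i}, \eqref{a}, \eqref{b}. The paper does this via a breadth-first search from $Qbadb=Q\setminus\{1,3\}$ combined with a three-way split on an arbitrary letter $s$ (according to whether $Qs=Q$ and whether $Xs=X$), showing that any $s$ outside the three forms lets $Qbadb$ be compressed to size $n-3$ in at most $5$ further steps, hence yields a word starting with $badb$ and contradicts $(2)$. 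Nothing in your proposal supplies this argument, so there is a genuine gap at the central step.
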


\begin{proof}
Notice that (4) $\Rightarrow$ (1) is trivial. So three implications remain to be proved.

\medskip \noindent
{\bf (1) {\mathversion{bold}$\Rightarrow$} (2)~}
Suppose that (1) holds. The first claim of (2) follows from the proof of the case $c = 3$ of theorem \ref{corank3}. To prove the second claim, suppose contrarily that $3z^2 = 1 = 1s$ and $2s = 3z$ for some $s \in \Sigma$. From $1s = 1$, it follows that $Qs = Q$. From $2z = 3$ and $3z^2 = 1$, we infer that $3z \notin \{1,2,3\}$. So we may assume that $3z = 4$. As $\{1,2\} \subseteq Q \setminus \{4\}$, $|(Q \setminus \{4\})x| \le n - 2$ follows. Consequently,
$$
Qxysx = (Q \setminus \{2\})sx = (Q \setminus \{4\})x = Q \setminus \{1,4x\}
$$
As $(Q \setminus \{1\})x = Q \setminus \{1\}$, lemma \ref{pq} with $p = 3x$ and $q = 4x$ yields a contradiction to (1). So the second claim of (2) holds as well.

\medskip \noindent
{\bf (3) {\mathversion{bold}$\Rightarrow$} (4)~}
Suppose that (3) holds. We can do breadth first search on the power automaton. We obtain that $(Q,\Sigma,\ldots)$ is a Pin DFA, and we can see that $z$ is as $s$ in \eqref{a} and $x$ is as $s$ in \eqref{b}. Furthermore, we may assume that $y = z$. Let $r$ be a wildcard for the states in $Q \setminus \{1,2,3,4\}$. Then we get the following transition graph for the power automaton, where $x$ and $z$ are examples of symbols which perform the transitions they mark, and some transitions are missing.
\begin{center}
\begin{tikzpicture}[semithick]
\tikzstyle{nodestyle}=[draw,fill=white,circle,inner sep=0pt,minimum width=1cm]
\fill[lightgray,rounded corners=5mm] (-0.5,2) rectangle (1.5,4); 
\fill[lightgray,rounded corners=5mm] (2.5,2) rectangle (10.5,4);
\fill[lightgray,rounded corners=5mm] (-1,-4) rectangle (11,1); 
\path (-1,-2.5) rectangle (11,1);
\node[nodestyle] (A) at (0.5,3) {$Q$};
\node[nodestyle] (B) at (3.5,3) {$\begin{array}{@{}c@{}} Q \setminus {} \\ \{1\} \end{array}$};
\node[nodestyle] (C) at (5.5,3) {$\begin{array}{@{}c@{}} Q \setminus {} \\ \{2\} \end{array}$};
\node[nodestyle] (D) at (7.5,3) {$\begin{array}{@{}c@{}} Q \setminus {} \\ \{3\} \end{array}$};
\node[nodestyle] (E) at (9.5,3) {$\begin{array}{@{}c@{}} Q \setminus {} \\ \{4\} \end{array}$};
\node[nodestyle] (0) at (0,0) {$\begin{array}{@{}c@{}} Q \setminus {} \\ \{1,3\} \end{array}$};
\node[nodestyle] (1) at (2,0) {$\begin{array}{@{}c@{}} Q \setminus {} \\ \{2,4\} \end{array}$};
\node[nodestyle] (2) at (4,0) {$\begin{array}{@{}c@{}} Q \setminus {} \\ \{1,4\} \end{array}$};
\node[nodestyle] (3) at (6,0) {$\begin{array}{@{}c@{}} Q \setminus {} \\ \{1,2\} \end{array}$};
\node[nodestyle] (4) at (8,0) {$\begin{array}{@{}c@{}} Q \setminus {} \\ \{2,3\} \end{array}$};
\node[nodestyle] (5) at (10,0) {$\begin{array}{@{}c@{}} Q \setminus {} \\ \{3,4\} \end{array}$};
\draw[->] (A) edge node[above] {$x$} (B) (B) edge node[above] {$z$} (C) 
(C) edge node[above] {$z$} (D) (D) edge node[above] {$z$} (E);
\draw[->] (D) edge[in=40,out=-140] node[auto=right,inner sep=2pt] {$x$} (0) 
(E) edge[in=40,out=-140] node[auto=right,inner sep=2pt] {$x$} (2);
\draw[->] (0) edge node[above] {$z$} (1) (1) edge node[above] {$x$} (2) 
(2) edge node[above] {$z$} (3) (3) edge node[above] {$z$} (4) (4) edge node[above] {$z$} (5);
\draw (5) edge ++(1.2,0);
\node[nodestyle] (6) at (6,-1.5) {$\begin{array}{@{}c@{}} Q \setminus {} \\ \{1,r\} \end{array}$};
\node[nodestyle] (7) at (8,-1.5) {$\begin{array}{@{}c@{}} Q \setminus {} \\ \{2,r\} \end{array}$};
\node[nodestyle] (8) at (10,-1.5) {$\begin{array}{@{}c@{}} Q \setminus {} \\ \{3,r\} \end{array}$};
\draw[->] (6) edge node[above] {$z$} (7) (7) edge node[above] {$z$} (8);
\draw (8) edge ++(1.2,0);
\node[nodestyle] (9) at (10,-3) {$\begin{array}{@{}c@{}} Q \setminus {} \\ \{4,r\} \end{array}$};
\draw (9) edge ++(1.2,0);
\end{tikzpicture}
%
%\begin{tikzpicture}[semithick]
%\tikzstyle{nodestyle}=[draw,fill=white,circle,inner sep=0pt,minimum width=1cm]
%\begin{scope}[overlay]
%  \clip (-2,-5) rectangle (12,-2);
%  \fill[lightgray,rounded corners=5mm] (-1,-4) rectangle (11,1); 
%\end{scope}
%\path (-1,-4) rectangle (11,-2);
%\node[nodestyle] (6) at (6,-1.5) {$\begin{array}{@{}c@{}} Q \setminus {} \\ \{1,r\} \end{array}$};
%\node[nodestyle] (7) at (8,-1.5) {$\begin{array}{@{}c@{}} Q \setminus {} \\ \{2,r\} \end{array}$};
%\node[nodestyle] (8) at (10,-1.5) {$\begin{array}{@{}c@{}} Q \setminus {} \\ \{3,r\} \end{array}$};
%\draw[->] (6) edge node[above] {$z$} (7) (7) edge node[above] {$z$} (8);
%\draw (8) edge ++(1.2,0);
%\node[nodestyle] (9) at (10,-3) {$\begin{array}{@{}c@{}} Q \setminus {} \\ \{4,r\} \end{array}$};
%\draw (9) edge ++(1.2,0);
%\end{tikzpicture}
\end{center}
Here, the shaded blocks correspond to cardinalities of the state set. The transitions which were omitted are exactly the cardinality-preserving transitions from given state sets, which do not go to the right (their target state sets have \emph{not} been omitted). From this, we infer that (3) is satisfied.

\medskip \noindent
{\bf (2) {\mathversion{bold}$\Rightarrow$} (3)~}
Suppose that (2) holds. From the proof of lemma \ref{reach}, we infer that $3x = 3$ and $1z = 2$. Furthermore, we may assume that $3z = 4$. 

We first show that $s = z$ satisfies \eqref{a} and $s = x$ satisfies \eqref{b}, by providing the missing links $4z = 1$ and $4x = 4$. On account of $Qz = Q$ and $(Q \setminus \{2\})x = Q \setminus \{1\}$,
$$
(Q \setminus \{1,3\})zx = (Q \setminus \{1z,3z\})x = (Q \setminus \{2,4\})x = (Q \setminus \{1,4x\})
$$
Furthermore, $Qxyzx = Q \setminus \{1,3\}$, so
$$
|(Q \setminus \{1,3\})m| = n-2 = |(Q \setminus \{1,4x\})m| 
$$
for all $m \in \Sigma^{*}$ such that $|m| \le 3$. This does not yield the situation in the proof of lemma \ref{pq} for $p = 3$ and $q = 4x$, but nevertheless, lemma \ref{pqlem} (ii) holds for $p = 3$ and $q = 4x$. By taking $s = z$, we see that
\begin{align*}
\{3,4x\}z &= \{3y,4xy\} & \{3y,4xy\} z = \{1,2\}
\end{align*}
So $\{3z^2,4xz^2\} = \{1,2\}$. As $1z = 2$ and $4z = 3z^2 \in \{1,2\}$, $4z = 1$ follows. Furthermore, $4xz^2 = 2 = 4z^2$. As $Qz^2 = Q$, $4x = 4$ follows.

Take $s \in \Sigma$. We must show that $s$ satisfies \eqref{i}, \eqref{a}, or \eqref{b}. We distinguish three cases.
\begin{itemize}

\item $Qs \ne Q$.

In the transition graph of the power automaton, the state sets of size $n-2$ which are missing are characterized by $Q \setminus \{r,r'\}$, where $r'$ is a wildcard for the states in $Q \setminus \{1,2,3,4,r\}$. Since $|(Q \setminus \{r,r'\})x| = n-3$, the transition graph of the power automaton shows that $(Q \setminus \{1,3\})s = Q \setminus \{1,3\}$. Hence $3s = 3$. The transition graph shows that
$$
(Q \setminus \{2,4\})s \in \big\{Q \setminus \{1,3\},Q \setminus \{1,4\}\big\}
$$
as well, so $4s \in \{3,4\}$. As $3s = 3$, $4s = 4$ follows. So $s$ satisfies \eqref{b}.

\item $Qs = Q$ and $1s = 2$.

Then the transition graph of the power automaton shows that $(Q \setminus \{1,3\})s = Q \setminus \{2,4\}$. So $3s = 4$. The transition graph shows that
$$
(Q \setminus \{1,4\})s \in \big\{Q \setminus \{2,4\},Q \setminus \{1,2\}\big\}
$$
as well, so $4s \in \{4,1\}$. As $4s \ne 3s = 4$, $4s = 1$ follows. The transition graph shows that
$$
(Q \setminus \{2,4\}) s \in \big\{Q \setminus \{1,3\},Q \setminus \{1,4\}\big\}
$$
in addition, so $2s \in \{3,4\}$. As $2s \ne 3s = 4$, $2s = 3$ follows. So $s$ satisfies \eqref{a}.

\item $Qs = Q$ and $1s = 1$.

Then the transition graph of the power automaton shows that $(Q \setminus \{1,3\})s = Q \setminus \{1,3\}$. So $3s = 3$. The transition graph shows that $(Q \setminus \{2,4\})s = Q \setminus \{2,4\}$ as well. So $\{2,4\}s = \{2,4\}$. The last claim of (2) excludes the case $2s = 4$. Hence $s$ satisfies \eqref{i}. \qedhere
\end{itemize}
\end{proof}

From the proof of (2) $\Rightarrow$ (3) of theorem \ref{greedy}, we obtain the following in addition.

\begin{corollary} \label{pincor}
Suppose that $(Q,\Sigma,\ldots)$ is a Pin DFA. If $|(Qxyzx)m| = n - 2$ for all $m \in \Sigma^{*}$ such that $|m| \le 5$, then $|Qxy^3xy^3x| = n - 3$.
\end{corollary}

So $y^3x$ is a compression word for both $Qx$ and its compression $Qxy^3x$.

\end{document}